% Authors: Thomas Blesgen, t.blesgen@th-bingen.de
%          Ada Amendola, adamendola@gmail.com
% Journal: - Meccanica (Springer)
% Date (this version): 25.03.19
%%%%%%%%%%%%%%%%%%%%%%%%%%%%%%%%%%%%%%%%%%%%%%%%%%%%%%%%%%%%%%%%%%%
%
\RequirePackage{fix-cm}

\documentclass[twocolumn]{svjour3}       % onecolumn (second format)
\smartqed  % flush right qed marks, e.g. at end of proof
\usepackage{graphicx}

\usepackage{color}
\usepackage{greekbf}
\usepackage{amsmath,amssymb}
\usepackage{mathtools}
%\mathtoolsset{showonlyrefs}

\usepackage{url,hyperref,lineno,microtype,subcaption}
\usepackage[onehalfspacing]{setspace}
\usepackage{graphicx}
\usepackage{epsfig}
\usepackage{amssymb}
\usepackage{mathptmx}      % use Times fonts if available on your TeX system
\usepackage{Tabbing}
\usepackage{color}

  % include pictures with "y"

\renewcommand{\rho}{\varrho}
\renewcommand{\phi}{\varphi}
\newcommand{\hp}{\widehat{p}}
\newcommand{\hq}{\widehat{q}}
\newcommand{\bbi}{\mathbf{i}}
\newcommand{\bbj}{\mathbf{j}}
\newcommand{\bbk}{\mathbf{k}}
\newcommand{\ovq}{\overline{q}}
\newcommand{\vL}{\varLambda}
\newcommand{\nn}{\nonumber}

\newcommand{\pat}{\partial_t}
\newcommand{\pax}{\partial_x}

\newcommand{\pal}{\partial_l}
\newcommand{\pam}{\partial_m}

\newcommand{\ve}{\varepsilon}
\newcommand{\Wst}{W_{\mathrm{st}}}
\newcommand{\WWc}{W_{\mathrm{c}}}
\newcommand{\Fe}{F_\mathrm{e}}
\newcommand{\Fp}{F_\mathrm{p}}
\newcommand{\fext}{f_\mathrm{ext}}
\newcommand{\Mext}{M_\mathrm{ext}}
\newcommand{\Ue}{U_\mathrm{\!e}}
\renewcommand{\Re}{R_\mathrm{\!e}}
\newcommand{\wRe}{\widehat{R_\mathrm{\!e}}}
\newcommand{\Ke}{K_\mathrm{\!e}}
\newcommand{\Hp}{\mathbb{H}_\mathrm{p}}
\newcommand{\cE}{{\mathcal E}}
\newcommand{\ddx}{\,\mathrm{d}x}
\newcommand{\dS}{\,\mathrm{d}S}
\newcommand{\skw}{\operatorname{skw}}
\newcommand{\Id}{\mathbb{I}}
\newcommand{\io}{\int\limits_\Omega}
\newcommand{\N}{\mathbb{N}}

\newcommand{\R}{\mathbb{R}}
\renewcommand{\H}{\mathbb{H}}

%\newtheorem{lemma}{Lemma}
%\newtheorem{remark}{Remark}
% please place your own definitions here and don't use \def but
% \newcommand{}{}

\DeclareMathAlphabet{\mathbf}{OT1}{cmr}{bx}{it}
\DeclareMathAlphabet{\bfit}{OT1}{cmr}{bx}{it}

%
% Macro per LaTeX
%
%    aggiornati al 20-set-99
%
%====  CARATTERI SPECIALI ==========================
%
\newcommand{\bydef}{\,\raise.050ex\hbox{\rm:}\kern-.025em\hbox{\rm=}\,}
\newcommand{\defby}{=\raise.075ex\hbox{\kern-.325em\hbox{\rm:}}\,}
 % -trasposto alla "adc"

\def\qed{\relax\ifmmode\hskip2em \Box\else\unskip\nobreak\hskip1em $\Box$\fi}
%
%==== VARIE =====================================
%
%\newcommand {\dim}[1] {\mathtt{dim}(#1)}
 % espilon
      % phi
    % 0 bold
    % 0 bold
%
%==== CALLIGRAFICI ==================

%
%==== BOLD ====
%

%

%
%==== ASSI (minuscole) e PUNTI (maiuscole) dello spazio euclideo ====
%

%

%
%==== ORLATI =============
%

%
%==== EULER SCRIPT =============
% non funzionano! perche?
%\usepackage{euler}

%
%==== EULER FRACTUR =============
%

%
%==== TIPOGRAFICI =============
%

%
%
%======Lowercase greek letters BOLD============
%
%
%\def\nub{\hbox{\mbo {\char 23}}}%="0?17
%\def\xib{\hbox{\mbo {\char 24}}}%="0?18
%\def\pib{\hbox{\mbo {\char 25}}}%="0?19
%\def\rhob{\hbox{\mbo {\char 26}}}%="0?1A
%\def\sigmab{\hbox{\mbo {\char 27}}}%="0?1B

%\newcommand {\nub}       {\mathbf{\nu}}
%="0?17

%\newcommand {\taub}     {\mathbf{\tau}}
%="0?17

%
%
%======Uppercase greek letters BOLD============
%
%

%
%==== LIN & COMPANY ================
%

\newcommand {\SO} {\mathcal{SO}}
\newcommand {\GL} {\mathcal{G}l}

%
%==== OPERATORI ================
%
\newcommand {\tr}[1]{\mbox{tr}\, #1}

%
%==== OPERATORI DIFFERENZIALI ================
%

    % divergenza
%\newcommand{\Div} {\mathrm{Div}\,}     % Divergenza
    % gradiente
    % Gradiente
             % delta

%
%==== MISCELLANEA ================
%

     % aggiunto piccolo

  %  tensore della deformazione
  %  tensore di sforzo

   %  tensore piezoelettrico
      %  tensore dielettrico

 %  C1 e C2 in alfa
 %  C1 e C2 in alfa

\newcommand{\sym}{\mathop{\mathrm{sym}}}

%\def\df{\mathord{\rm f\,^{^{\prime}}\!\!}}
%\def\dg{\mathord{\rm g^{^{\prime}}\!}}
%\def\ddf{\mathord{\rm f\,^{^{''}}\!\!\!}}
%\def\ddg{\mathord{\rm g\,^{^{''}}\!\!\!}}

%\newcommand{\cblue}{\color{blue}}
%\newcommand{\cred}{\color{red}}
%\newcommand{\cn}{\color{black}}

% Insert the name of "your journal" with
\journalname{}
\begin{document}
\title{Mathematical modeling of a Cosserat method in finite-strain holonomic
plasticity}%\thanks{Grants or other notes
%about the article that should go on the front page should be
%placed here. General acknowledgments should be placed at the end of the article.}

%\subtitle{Do you have a subtitle?\\ If so, write it here}

\titlerunning{Mathematical modeling of a Cosserat method in finite-strain holonomic
plasticity}        % if too long for running head

\author{Thomas Blesgen         \and
        Ada Amendola$^{*}$ ($^{*}$Corresponding Author, Orcid ID: 0000-0002-2562-881X) %etc.
}

\authorrunning{T. Blesgen, A. Amendola} % if too long for running head

\institute{Thomas Blesgen \at
             Department of Applied Mathematics, Bingen University of Applied Sciences, Bingen, Germany\\
%              Tel.: +123-45-678910\\
%              Fax: +123-45-678910\\
%\email{}          %  \\
%             \emph{Present address:} of F. Author  %  if needed
           \and
           Ada Amendola\at
              Department of Civil Engineering, University of Salerno, Fisciano, Italy
\email{adaamendola1@unisa.it} 
}

%def\corrAuthor{Corresponding Author}
%
%\def\corrEmail{}

\date{Submitted to Springer}
% The correct dates will be entered by the editor

\maketitle

\begin{abstract}
This article deals with the mathematical derivation and the validation over
benchmark examples of a numerical method for the solution of a finite-strain
holonomic (rate-independent) Cosserat plasticity problem for
materials, possibly with microstructure.
Two improvements are made in contrast to earlier approaches: First, the
micro-rotations are parameterized with the help of an Euler-Rodrigues
formula related to quaternions. Secondly, as main result, a novel two-pass
preconditioning scheme for searching the
energy-minimizing solutions based on the
limited memory Broyden-Fletcher-Goldstein-Shanno quasi-Newton method is
proposed that consists of a predictor step and a corrector-iteration.
After outlining the necessary adaptations to the model, numerical simulations
compare the performance and efficiency of the new and the old algorithm.
The proposed numerical model can be effectively employed for studying the
mechanical response of complicated materials featuring large size effects.

\keywords{Micropolar materials \and Crystal plasticity \and Quaternions \and Cosserat theory \and Numerical simulations \and Preconditioning}
% \PACS{PACS code1 \and PACS code2 \and more}
% \subclass{MSC code1 \and MSC code2 \and more}
\end{abstract}

%%%%%%%%%%%%%%%%
%%%%%%%%%%%%%%%%
%%%%%%%%%%%%%%%%

\section{Introduction}
\label{secintro}
In recent years, the scientific interest towards sophisticated
and heterogeneous materials featuring multiple internal length scales has grown
significantly, mainly due to the possibility of playing with
the internal microstructure of these materials to model and engineer structures
that exhibit properties not found in conventional materials (refer, e.g., to
\cite{Lakes91,bardellaSI} and references therein).
Such materials include cellular solids, fibrous and particle composites,
biological materials, robots,  and also building-scale systems made of masonry
structures
\cite{ZMP17,TACM14,TCPB18a,JAM10,JDP98,minga}.
%\cite{CP95,ZMP17,TACM14,TCPB18a,JAM10,cosseratfrontiers,JDP98,minga}.
The mechanical modeling of these materials and structures calls for the
introduction of degrees of freedom that are not accounted for in classical
continuum mechanics, typically  rotation of points (or micro-rotations)
and couple stresses \cite{BOOK95,Mindlin64,Eringen99}.
A viable continuum description of such phenomena is provided by the micropolar
theories of Cosserat continua \cite{Coss09}, which have been intensively
applied since their introduction in 1909 to a variety of different problems in
solid and structural mechanics, fluid dynamics, liquid crystals, granular
materials, powders, etc. (cf. \cite{Maugin10,Neff06,TJMPS18} for an overview).
Particularly interesting is the Cosserat modeling of chiral honeycomb lattices
with bending-dominated behavior whose mechanical response cannot be accurately
described by classical continuum theories due to large size effects,
\cite{ZMP17}. So far, physical models of these exciting materials have
been fabricated through additive manufacturing (AM) technologies in polymeric
materials and have been described through Cosserat elasticity,
\cite{ZMP17}.
The numerical model presented in this work allows for simulating the response
of ductile versions of such metamaterials, assuming radial loading and
holonomic plasticity, \cite{corradi1,corradi2,desimone}, which are, e.g.,
fabricated via AM techniques manual assembling methods employing metallic
materials, \cite{pentamode,theta1,theta2}.

Since the Cosserat model of a micropolar material induces sensitivity to the
microrotation strain gradient, such generalized continua are endowed
with an internal length scale such that localization zones have a finite width.
The Broyden-Fletcher-Goldfarb-Shanno (BFGS) algorithm is a well-known
quasi-Newton method where instead of storing the full Hessian matrix $H$
(a big matrix for large dimensions) an approximation is
computed by the sum of two rank-one matrices. In the limited-memory (L-BFGS)
variant, \cite{Nocedal80,LBFGS}, the approximation to $H$ is constructed from
a small number of vectors by a rank-one update formula, see Eqn.~(\ref{Hmul})
below. The resulting algorithm is still considered the state-of-the-art method
when huge systems of equations with a very large number of unknowns need to get
solved.

In \cite{Ble15}, a L-BFGS algorithm is developed for the solution of a
finite-strain rate-independent Cosserat model of finite plasticity.
Therein, the {\it elastic Cosserat micro-rotations} $\Re$ are parameterized by a
vector $\alpha=(\alpha_1,\alpha_2,\alpha_3)\in\R^3$ of Euler angles,
\begin{align}
\widetilde{\Re}(\alpha) &:= R_3(\alpha_3)R_2(\alpha_2)R_1(\alpha_1)\nn\\
\label{para2}
&:= \!\left(\begin{array}{c@{\;\;}c@{\;\;}c}
1             & 0             & 0\\
0             & \cos\alpha_3  & \sin\alpha_3\\
0             & -\sin\alpha_3 & \cos\alpha_3
\end{array}\right) \left(\begin{array}{c@{\;\;}c@{\;\;}c}
\cos\alpha_2  & 0             & -\sin\alpha_2\\
0             & 1             & 0\\
\sin\alpha_2  & 0             & \cos\alpha_2
\end{array}\right) 
\nn\\&
 \ \ \ \ \ \left(
\begin{array}{c@{\;\;}c@{\;\;}c}
\cos\alpha_1  & \sin\alpha_1  & 0\\
-\sin\alpha_1 & \cos\alpha_1  & 0\\
0             & 0             & 1\end{array}\right).
\end{align}

%\begin{align}
%\widetilde{\Re}(\alpha) &:= R_3(\alpha_3)R_2(\alpha_2)R_1(\alpha_1)\nn\\
%\label{para2}
%&:= \!\left(\!\!\!\begin{array}{c@{\;\;}c@{\;\;}c}
%1             & 0             & 0\\
%0             & \cos\alpha_3  & \sin\alpha_3\\
%0             & -\sin\alpha_3 & \cos\alpha_3
%\end{array}\!\!\!\right)\!\!\!\left(\!\!\!\begin{array}{c@{\;\;}c@{\;\;}c}
%\cos\alpha_2  & 0             & -\sin\alpha_2\\
%0             & 1             & 0\\
%\sin\alpha_2  & 0             & \cos\alpha_2
%\end{array}\!\!\!\right)\!\!\! 
%&\\
%  \left(\!\!\!\!
%\begin{array}{c@{\;\;}c@{\;\;}c}
%\cos\alpha_1  & \sin\alpha_1  & 0\\
%-\sin\alpha_1 & \cos\alpha_1  & 0\\
%0             & 0             & 1\end{array}\!\!\!\right).
%\end{align}

Two main criticisms of the approach in \cite{Ble15} are eminent. The first is
that Euler angles are not well-suited to parameterize the rotation group
$\SO(3)$ and have several shortcomings.
%%in particular gimbal lock.
Especially the parameterization may degenerate and become non-unique.

In other areas of mechanics such as unmanned aerial vehicle (UAV) control,
quaternion-based descriptions have demonstrated their superior performance, see
\cite{AAMR13,DSM15}.
Therefore, in this article, the alternative parameterization 
\begin{equation}
\label{para1}
{\footnotesize \Re(q):=\left(\begin{array}{c@{\quad}c@{\quad}c}
q_0^2\!+\!q_1^2\!-\!q_2^2\!-\!q_3^2 & 2(q_1q_2\!-\!q_0q_3)
& 2(q_1q_3\!+\!q_0q_2)\\
2(q_1q_2\!+\!q_0q_3) & q_0^2\!-\!q_1^2\!+\!q_2^2\!-\!q_3^2
& 2(q_2q_3\!-\!q_0q_1)\\
2(q_1q_3\!-\!q_0q_2)& 2(q_2q_3\!+\!q_0q_1) & q_0^2\!-\!q_1^2\!-\!q_2^2\!+\!q_3^2
\end{array}\right)}
\end{equation}
is studied which is based on an Euler-Rodrigues vector $q=(q_1,q_2,q_3,q_4)$
defined on the unit sphere
\[ S^3:=\Big\{q=(q_1,q_2,q_3,q_4)\in\R^4\;\Big|\; |q|^2=1 \Big\}. \]

\noindent Formula~(\ref{para1}) goes back to historical work by
L. Euler in 1775, \cite{Eul1775}. The approach was independently
reinvented by Rodrigues in 1840, \cite{Rod1840}.
As was already discovered early,
it can also be derived from quaternion theory, \cite{Ham2}.

\noindent
The second major criticism to \cite{Ble15} is that the quasi-Newton iteration
may get stuck in a local minimum of the mechanical energy
without finding the global minimizer. Preconditioning of the
numerical scheme may help to speed up the code and correctly compute the
global minimizer. While there is vast literature on preconditioning in general,
only a few articles deal with preconditioning of the L-BFGS-method,
\cite{A06,EM12,DH18,ML13}, especially when directly related to energy
minimization, \cite{JBES04}.

\vspace*{2mm}
The first goal of this paper is to study the implications of
(\ref{para2}), (\ref{para1}) on the finite-strain Cosserat algorithm,
assuming radial loading and holonomic-type plasticity
\cite{corradi1,corradi2,desimone}.
Secondly, as main result, a two-step preconditioning strategy of the
L-BFGS algorithm is proposed that consists of a predictor step followed by a
corrector iteration for solving the time-discrete problem. This two-pass
approach effectively defines a non-linear preconditioning strategy.

This article is organized in the following way. In Section~\ref{secmodel},
the finite-strain Cosserat model is reviewed. Section~\ref{secER} derives
background theory on a quaternion-based Cosserat theory. Section~\ref{secprec}
revisits the L-BFGS update scheme and derives the aforementioned preconditioning
method. Section~\ref{secnum} performs various numerical tests, followed by a
discussion of the results and an outlook. At the end of the paper, a complete
list of symbols with explanations can be found. The generalization of the
present approach to more general cases of gradient-type plasticity
\cite{Gurtin02,CHM2002,bardellaJMPS,schatz,borokinni} 
is addressed to future work.

\section{The finite-strain Cosserat model of holonomic plastic materials with
microstructure} \label{secmodel}
The deformation mapping of the current material from the reference
configuration $\Omega\subset\R^3$ to the deformed state $\Omega_t$ is described
by a diffeomorphism $\phi\in\GL^+(3)$, for times $t\ge0$.
Throughout, $\Omega$ is assumed a smooth Lipschitz domain.

Assuming radial loading and holonomic-type plasticity \cite{corradi1,corradi2},
the fundamental relationship of the Cosserat approach is the multiplicative
decomposition
\begin{equation}
\label{split}
F=\Fe\Fp=\Re\Ue\Fp
\end{equation}
of the deformation tensor $F:=D\phi$, where $\Fe$, $\Fp$ are the elastic and the
plastic deformation tensors, $\Ue=\Re^tD\phi\Fp^{-1}\in\GL(3)$ is the
stretching component, and
\[ \Re\in\SO(3):=\{R\in\GL(3)\;|\;\det(R)=1,\;R^tR=\Id\} \]
are the micro-rotations. In (\ref{split}), $\Ue$ need not be symmetric and
positive definite, i.e. the decomposition $\Fe=\Re\Ue$ is in general {\it not}
the polar decomposition.

We fundamentally assume that the mechanical energy depends on the
elastic part $\Fe$ of the deformation, only.
With $\kappa$ denoting the density of the
(geometrically necessary) dislocations, it follows by frame indifference that
the stored mechanical energy is of the form, \cite{Kessel64},
\[ W(\Fe,\kappa)=\Wst(\Ue)+\WWc(\Ke)+V(\kappa), \]
where $\Ke=(\Re^t\partial_{x_l}\Re)_{1\le l\le3}$ is the (right) curvature
tensor, $\Wst$ denotes the stretching energy, $\WWc$ the curvature energy due to
bending and torsion of the material, and $V$ the energy of stored dislocations.
For these functionals we make the ansatz, cf. \cite{Neff06,Ble13},
\begin{align}
\label{Wstdef}
\Wst(\Ue) \,:=&\, \mu\|\sym\Ue-\Id\|^2+\mu_c\|\skw(\Ue-\Id)\|^2\nn\\
& +\frac\lambda2|\tr(\Ue-\Id)|^2,\\
\WWc(q) \,:=&\, \mu_2\|\Ke(q)\|^2=\mu_2\|\nabla\Re(q)\|^2\nn\\
\label{Wcdef}
=&\,\mu_2\sum_{l=1}^3\|\pal\Re(q)\|^2,\\
\label{Vdef}
V(\kappa) \,:=&\, \rho\kappa^2.
\end{align}
In (\ref{Wstdef}), (\ref{Wcdef}),
$\mu_2:=\frac\mu2L_c^2$ with the internal length scale $L_c>0$,
the Cosserat couple modulus $\mu_c>0$, and $\lambda>0$, $\mu>0$ are the
Lam{\'e} parameters; $\pal:=\frac{\partial}{\partial x_l}$, $1\le l\le3$ for
short; $\rho>0$ is a constant.
In (\ref{Wstdef}), $\sym(A):=\frac12(A+A^t)$, $\skw(A):=\frac12(A-A^t)$ denote
the symmetric and skew-symmetric part of a tensor $A$, respectively;
$\tr(A):=\sum_iA_{ii}$ is the trace operator,
$\|A\|:=\sqrt{\tr(A^tA)}$ the Frobenius matrix norm; 
$\mathbf{u}\!\cdot\!\mathbf{v}:=\sum_{i=1}^3u_iv_i$
is the inner product in $\R^3$, $\Id$ the real $3\times3$ identity matrix.
For $A,B\in\R^{3\times3}$,
$A\!:\!B:=\tr(A^tB)=\sum_{i,j=1}^3A_{ij}B_{ij}$
denotes the inner product in $\R^{3\times3}$.
For a general introduction to tensor calculus in plasticity, we recommend
\cite{HR99,Lub08}.

Applying ideas from \cite{FF95}, see also \cite{OR99}, the time evolution of the
deformed material can be computed by a sequence of minimization problems for the
mechanical energy. If $h>0$ is a fixed time step, for given
$(\Fp^0,\kappa^0)$ of the previous time step, the values of
$(\phi,\Re,\Fp,\kappa)$ need to be calculated at time $t+h$.
Let $P:=\Fp^{-1}$ be the plastic backstress, and $P^0:=({\Fp}^0)^{-1}$.
Then the approximations
\[ d_t^h(\Fp):=\frac{\Id-P^{-1}P^0}{h},\qquad \pat^h\kappa:=
\frac{\kappa-\kappa^0}{h} \]
of the time derivatives are used. Other forms of time integrators are discussed
in \cite{WA90}. We obtain the minimization problem
\begin{align}
\cE(\phi,q,\Fp,\kappa):=\io& \Wst(\Ue(\phi,q,\Fp))+\WWc(\Ke(q))+V(\kappa)
\nn\\[-2mm]
&+\! \Lambda\big(1\!-\!|q|^2\big)^2\!-\!\fext\cdot\phi\!-\!\Mext\!:\!\Re(q)\nn\\
& +hQ^*(d_t^h(\Fp),\pat^h\kappa)\ddx-\int\limits_{\Gamma_\mathfrak{t}}
\vec{\mathfrak{t}}\!\cdot\!\phi\dS\nn\\[-2mm]
\label{Emin0}
& \hspace*{-1em} -\!\int\limits_{\Gamma_C}M_\mathfrak{t}\!:\!\Re(q)\dS\to\min
\end{align}
subject to the initial and Dirichlet boundary conditions
\begin{equation}
\label{BC}
\begin{aligned}
& \phi(x,0)=x,\quad\kappa(\cdot,0)=\kappa^0\qquad\mbox{in }\Omega,\\
& \phi=g_{_D},\quad q=q_D\hspace*{56pt}\mbox{on }\Gamma_D
\end{aligned}
\end{equation}
with fixed Dirichlet boundary data $q_D$ and $g_D$. As is typical of a
variational theory, the functional $\cE$ represents the total mechanical
energy of the system minus the ground state energy.
In (\ref{Emin0}), (\ref{BC}), $\Gamma_D\subset\partial\Omega$ is that part of
the boundary where Dirichlet conditions are applied; $\Gamma_\mathfrak{t}$
is the part of the boundary where traction boundary conditions apply;
$\Gamma_C\subset\partial\Omega$ the boundary where surface couples are applied.
It must hold $\Gamma_D\cap\Gamma_\mathfrak{t}=\emptyset$,
$\Gamma_D\cap\Gamma_C=\emptyset$. For simplicity, we assume from now on
$\Gamma_D=\partial\Omega$ and $\Gamma_\mathfrak{t}=\Gamma_C=\emptyset$.

In (\ref{Emin0}), the term $\vL(|q|^2-1)^2$ ensures the validity of the
constraint $|q|=1$ in $\Omega$, where $\vL>0$ is a constant.
By $\fext=\fext(t)$, $\Mext=\Mext(t)$, the external volume force density and
external volume couples are specified, respectively. The term
$hQ^*(d_t^h(\Fp),\pat^h\kappa)$ is the dissipated mechanical energy
in the time interval from $t$ to $t+h$. It is the Legendre-Fenchel dual
\begin{equation}
\label{LFdual}
Q^*(\Fp,\kappa):=\sup_{(X,\xi)}\big\{X:\Fp+\xi\kappa-Q(X,\xi)\big\}
\end{equation}
of the plastic potential
\[ Q(X,\xi):=\left\{\begin{array}{ll}0,& \;\mbox{for }Y(X,\xi)\le0,\\
\infty,& \;\mbox{else,}\end{array}\right. \]
where $Y\le0$ is the yield function with $Y=0$ indicating plastic flow.
In case of the van Mises condition,
%% for Neo-Hookian materials
\[ Y(\sigma,\xi):=\|\mathrm{dev}\sym\sigma\|-\sigma_Y-\xi \]
with $\mathrm{dev}\sigma:=\sigma-\frac{1}{3}\Id$ the deviatoric part of
$\sigma$.
The above formulas establish a rate-independent theory where the material
responds immediately (infinitely fast) to applied forces.

As a result of plastic deformation due to structural changes within the
material like the increase of immobilized dislocations inside the lattice
structure, hardening occurs, \cite{CGG06,BL06}.
It is assumed throughout the text that plastic deformations only occur along
one a-priori given material-dependent single-slip system,
specified by a normal vector $\mathbf{n}$ and a slip vector $\mathbf{m}$
with $|\mathbf{m}|=|\mathbf{n}|=1$ and $\mathbf{m}\!\cdot\!\mathbf{n}=0$,
see \cite{Gurtin02}.

The real parameter $\gamma$ determines the plastic slip and the plastic
deformation tensor by
\begin{equation}
\label{Fpdef}
\Fp=\Fp(\gamma):=\Id+\gamma\,\mathbf{m}\!\otimes\!\mathbf{n}.
\end{equation}
Formula~(\ref{Fpdef}) is obtained from
$\dot{\Fp}=\dot{\gamma}\,\mathbf{m}\otimes\mathbf{n}$ by integration from
the initial state $\Fp(t=0)=\Id$ to time $t$.

In contrast to \cite{Ble13}, we restrict here to the case of one slip system,
by leaving the multislip case for future work.
%The reason is that the generalization
%$\Fp=\Id+\sum_{a=1}^I\gamma_a\mathbf{m}_a\otimes\mathbf{n}_a$ of (\ref{Fpdef})
%to $I>1$ slip systems under the single-slip constraint in general may lead to
%unphysical kinematics, see \cite{BNGG03}. Therein, as a benchmark problem,
%simple shear is applied to a constrained strip of infinite length in
%two directions endowed with two slip systems symmetrically oriented
%to the applied shear direction. In this context, we remark that the
%relaxation results in \cite{CO05} only hold for linear kinematics, where
%it is shown that by the infinite latent-hardening assumption, a
%material under the constraint that only one slip is active at a time 
%is indistinguishable from the same material without that constraint.

As can be checked, \cite{CHM2002}, the dissipated energy satisfies the
relationship
\begin{equation}
\label{Qstardef}
Q^*(\dot{A},\dot{k})=\left\{\!\begin{array}{l@{\quad}l}
\sigma_Y|\dot{\gamma}|, & \mbox{if }\dot{A}=\dot{\gamma}\,
\mathbf{m}\!\otimes\!\mathbf{n}\mbox{ and }|\dot{\gamma}|+\dot{k}\le0,\\
\infty, & \mbox{else.} \end{array}\right.
\end{equation}
As is well known, plastic deformations always occur on the boundary of the
set of feasible deformations. Consequently, see \cite{Ble13}, the constraint
$|\gamma-\gamma^0|+\kappa-\kappa^0\le0$ appearing in the
definition of $Q^*$ has to be satisfied with equality, leading to
\begin{equation}
\label{kapdef}
\kappa=-|\gamma-\gamma^0|+\kappa^0,
\end{equation}
which allows us to define $\kappa$ as a function of $\gamma$, $\gamma^0$, and
$\kappa^0$. Plugging in (\ref{kapdef}) in $V(\kappa)$ and dropping an
inconsequential constant $\rho(\kappa^0)^2$,
we end up with the optimization problem
\begin{equation}
\label{Emin1}
\begin{aligned}
\cE(\phi,q,\gamma):=\io\!\Big[&\Wst(\Re^t(q)D\phi\Fp(\gamma)^{-1})+\WWc(q)
\\[-1.5mm]
&+\vL(|q|^2-1)^2-\fext\cdot\phi-\Mext:\Re(q)\\
& +\rho\big(\gamma\!-\!\gamma^0\big)^2+|\gamma\!-\!\gamma^0|\big(
\sigma_Y\!-\!2\rho\kappa^0\big)\Big]\ddx\to\min
\end{aligned}
\end{equation}
subject to the initial and boundary conditions (\ref{BC}) for
$\Gamma_D=\partial\Omega$.

The functional $\cE$ in (\ref{Emin1}) coincides with the one in
\cite{Ble14} except for the new term $\Lambda\big(|q|^2-1\big)^2$ and
the parameterization (\ref{para1}) instead of (\ref{para2}) for the
micro-rotations.

For a fixed discrete time step $h>0$ and
known $(\gamma^0,\kappa^0)$ at time $t$, the new $(\phi,q,\gamma)$
representing values at time $t+h$ are calculated from (\ref{Emin1}). Finally,
the new $\kappa$ is computed from (\ref{kapdef})
and $(\gamma,\kappa)$ become the initial values of the next time step.

If the material is initially free of dislocations, $\kappa(\cdot,0)=0$,
the hardening law (\ref{kapdef}) implies $\kappa(t+h)\le\kappa(t)\le0$ for all
times $t$. Hence, $-2\rho\kappa^0\ge0$ in (\ref{Emin1}) represents
the increase of the yield stress $\sigma_Y$ due to stored dislocations.

\section{An application of the Euler-Rodrigues formula}
\label{secER}
Following the classical notation in \cite{Hamilton,Ebbing}, let
\begin{align*}
%\label{Hdef}
\H &:= \mathrm{span}_\R\{1,\bbi,\bbj,\bbk\}\\
&= \big\{q=q_0+q_1\bbi+q_2\bbj+q_3\bbk\;\big|\;q_0,q_1,q_2,q_3\in\R\big\}
\end{align*}
denote the {\it space of quaternions}, where the quaternion imaginary units
satisfy $\bbi^2=\bbj^2=\bbk^2=\bbi\bbj\bbk=-1$. Let
\[ \Hp:=\{q=q_0+q_1\bbi+q_2\bbj+q_3\bbk\in\H\;|\;q_0=0\} \]
be the space of {\it pure quaternions} and
%% Note that this is NOT the same as $q=q_0+\bbi q_1+\bbj q_2+\bbk q_3$.
\begin{equation}
\label{qhat}
q=q_0+\hq:=q_0+q_1\bbi+q_2\bbj+q_3\bbk.
\end{equation}
The set $\H$ is equipped with the multiplication (for $p,q\in\H$)
\begin{equation}
\label{Hmult}
pq:=p_0q_0-\hp\cdot\hq+p_0\hq+q_0\hp+\hp\times\hq,
\end{equation}
where $\hp\cdot\hq:=p_1q_1+p_2q_2+p_3q_3$ specifies as above the inner
product and $\hp\times\hq$ the vector product of $\R^3$, respectively.
In general, $pq\not=qp$, so $\H$ is an associative, non-commutative algebra.
Let $\ovq:=q_0-\hq$ be the {\it conjugate} of $q$ and
\begin{equation}
\label{qnorm}
|q|:=\big(q\ovq\big)^{1/2}=\big(\ovq q\big)^{1/2}
=\big(q_0^2+q_1^2+q_2^2+q_3^2\big)^{1/2}
\end{equation}
be the {\it modulus} of $q$. By Formula~(\ref{Hmult}),
$q\in\H^*:=\H\setminus\{0\}$ possesses the multiplicative inverse
$q^{-1}=\frac{\ovq}{|q|^2}$. Let
\[ \mathrm{so}(3):=\{\omega\in\R^{3\times3}\;|\;\omega^t=-\omega\} \]
be the Lie algebra of $\SO(3)$. The {\it alternating skew tensor} 
$\ve:\Hp\to\mathrm{so}(3)$ is defined by
\begin{equation}
\label{epsdef}
\ve(\hq):=\left(\begin{array}{ccc} 0 & -q_3 & q_2\\ q_3 & 0 & -q_1\\
-q_2 & q_1 & 0 \end{array}\right).
\end{equation}
%Evidently, $\ve(\hq)v=\hq\times v$ for any $v\in\R^3\simeq\Hp$.
Evidently,
\begin{equation}
\label{epsprop}
\ve(\hq)v=\hq\times v\qquad\mbox{for }v\in\R^3\simeq\Hp.
\end{equation}
By direct inspection, it is straightforward to verify that for every $q\in S^3$
\begin{equation}
\label{R2def}
\Re(q)v:=qv\ovq
\qquad\mbox{for }v\in\R^3\simeq\Hp
\end{equation}
defines a rotation in $\SO(3)$.
%%It preserves orientation and distance
Using (\ref{Hmult}), this leads to
\begin{equation}
\label{para1b}
\Re(q)\,=\,(2q_0^2-|q|^2)\Id+2\hq\otimes\hq+2q_0\ve(\hq).
\end{equation}
Plugging in the above definitions, this coincides with Formula~(\ref{para1}).

The mapping $\Re$ thus introduced has the properties
\[ \Re(1)=\Id,\qquad\Re(\ovq)=\Re(q)^t,\qquad\Re(pq)=\Re(p)\Re(q) \]
and is therefore an algebra-homomorphism. It is a double cover of $\SO(3)$,
especially it is non-unique, since
\begin{equation}
\label{Rdouble}
\Re(q)=\Re(-q)\qquad\mbox{for }q\in S^3.
\end{equation}
In comparison, the parameterization (\ref{para2}) breaks down for
$\alpha_2=\frac\pi2$, in which case $\alpha_1$ and $\alpha_3$ denote a rotation
around the same axis. In summary, both (\ref{para1}) and (\ref{para2}) set up
rivaling charts on the manifold $\SO(3)$ which have certain disadvantages when 
used globally.
\\
Formula~(\ref{para1}) can be used to interpolate between rotations and allows
to introduce a distance in $\SO(3)$, see, e.g., \cite{DKL98}. This is a
prerequisite to studying surface energies between grains or particles
of different orientations, \cite{Ble17}.
%%Simple notion of distance: $\mathrm{dist}(\Re(p),\Re(q)):=|p-q|$.
%\vspace*{4mm}
\\
For $x\in\R^3$ and a quaternion field $q=q(x)$, the $m$-th material
{\it curvature vector} or {\it Darboux vector} is given by
\begin{equation}
\label{Kmdef}
\Ke^m(q):=2\ovq\pam q\in\Hp,\qquad1\le m\le3.
\end{equation}
The following lemma computes the derivatives of $\Re(q)$ and $\Ke(q)$ in $\H$
with $|q|=1$.
\begin{lemma}[Lie Derivatives of $\Re$ and $\Ke^m$]
\label{lem1}
Let $q=q(x):\R^3\to S^3$ and $1\le l,m\le 3$. Then
\begin{align}
\label{Lie1}
\pal\Re(q) &= \Re(q)\ve(\Ke^l(q)),\\
\label{Lie2}
\pal\Ke^m(q) &= 2\ovq\big[\pal\pam q-\pal q\ovq\pam q\big].
\end{align}
\end{lemma}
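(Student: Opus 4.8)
The plan is to derive both identities by direct differentiation, using only the defining formulas $\Re(q)v=qv\ovq$ and $\Ke^m(q)=2\ovq\pam q$ together with the single structural fact that on $S^3$ the constraint $|q|=1$ reads $q\ovq=\ovq q=1$, so that $\ovq=q^{-1}$. Differentiating this constraint with respect to $x_l$ gives $(\pal q)\ovq+q(\pal\ovq)=0$, and left-multiplication by $\ovq$ yields the master relation
\[ \pal\ovq=-\ovq(\pal q)\ovq. \]
I would also record from the multiplication rule (\ref{Hmult}) that pure quaternions $a,b\in\Hp$ satisfy $ab-ba=2\,a\times b$, and note that $\Ke^l(q)$ is itself pure (its scalar part vanishes because $\ovq\pal q+\overline{\ovq\pal q}=\pal(\ovq q)=0$), so that this commutator identity applies to it.

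For (\ref{Lie1}) I would differentiate the rotation action on a fixed vector $v\in\R^3\simeq\Hp$. Writing $\pal q=\tfrac12\,q\,\Ke^l(q)$ and, via the master relation, $\pal\ovq=-\tfrac12\,\Ke^l(q)\,\ovq$, the product rule gives
\[ \pal\big(qv\ovq\big)=\tfrac12\,q\big(\Ke^l(q)\,v-v\,\Ke^l(q)\big)\ovq. \]
Applying the commutator identity to the pure quaternions $\Ke^l(q)$ and $v$ turns the bracket into $2\,\Ke^l(q)\times v=2\,\ve(\Ke^l(q))v$ by (\ref{epsprop}), whence $\pal(\Re(q)v)=\Re(q)\ve(\Ke^l(q))v$. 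Since $v$ is arbitrary, this is precisely the matrix identity (\ref{Lie1}).

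For (\ref{Lie2}) I would differentiate $\Ke^m(q)=2\ovq\pam q$ directly,
\[ \pal\Ke^m(q)=2\big[(\pal\ovq)\,\pam q+\ovq\,\pal\pam q\big], \]
and substitute the master relation $\pal\ovq=-\ovq(\pal q)\ovq$ into the first term. Factoring out $2\ovq$ on the left then produces $2\ovq\big[\pal\pam q-(\pal q)\ovq\,\pam q\big]$, which is exactly (\ref{Lie2}).

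The computations are short, so the only genuine care needed is in the non-commutative algebra: because $\H$ is non-commutative the order of the quaternion factors must be preserved throughout, and the master relation must be applied on the correct side. The one place where a real structural input enters is the passage from the commutator $\Ke^l(q)\,v-v\,\Ke^l(q)$ to the cross product in the proof of (\ref{Lie1}); this step relies on $\Ke^l(q)$ being a \emph{pure} quaternion, which is why verifying $\Ke^l(q)\in\Hp$ at the outset is not mere bookkeeping but the hinge of the argument.
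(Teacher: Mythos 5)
Your proof is correct, and for (\ref{Lie1}) it is organized genuinely differently from the paper's. For (\ref{Lie2}) the two arguments are essentially the same two lines in a different order: the paper first clears the conjugate by rewriting (\ref{Kmdef}) as $2\pam q=q\Ke^m(q)$, differentiates, and multiplies back by $\ovq$, while you differentiate $\Ke^m(q)=2\ovq\pam q$ directly and eliminate $\pal\ovq$ via the master relation $\pal\ovq=-\ovq(\pal q)\ovq$ obtained from $\ovq q=1$. For (\ref{Lie1}), however, the paper proves the equivalent rearranged identity $\ve(\Ke^l(q))=\Re(q)^t\pal\Re(q)$ by starting from $\ve(\Ke^l(q))v$, converting the cross product into the pure part of a quaternion product, extracting that pure part via $2\widehat{w}=w-\overline{w}$, and then recognizing the resulting expression as the perfect derivative $\ovq\,(\pal(qv\ovq))\,q$; you instead differentiate the rotation action $\Re(q)v=qv\ovq$ forwards by the product rule, using $\pal q=\tfrac12\,q\Ke^l(q)$ and $\pal\ovq=-\tfrac12\,\Ke^l(q)\ovq$, and land on the commutator $\Ke^l(q)v-v\Ke^l(q)$, which the identity $ab-ba=2\,a\times b$ for pure quaternions turns into $2\,\ve(\Ke^l(q))v$. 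The substance is the same --- both hinge on the unit constraint and on the purity of $\Ke^l(q)$ (the paper uses purity implicitly when identifying $\hp\times\hq$ with the pure part of $\hp\hq$; you verify it explicitly, a small but genuine gain in rigor) --- but your forward Leibniz computation is more systematic: no step requires reverse-engineering a product rule, and each manipulation is forced. What the paper's direction buys is that it delivers the left-invariant form $\ve(\Ke^l)=\Re^t\pal\Re$ as the primary object, which is exactly the form exploited in the curvature-energy computation (\ref{WcKe}). Either proof could stand in place of the other without loss.
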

\begin{proof}
An elementary proof of (\ref{Lie1}) can be found in \cite{Kuipers99},
Chapter~11. The following proof is a modification of an argument in
\cite{LL09}.
%% Eqn.(17) and Lemma 2.4 in \cite{LL09}
Let $v\in\R^3\simeq\Hp$ and let $w\in\R^3$ denote various changing vectors.
Then it holds
\begin{alignat*}{2}
\ve(\Ke^l(q))v &= \ve(2\ovq\pal q)v &&\mbox{by Eqn.~(\ref{Kmdef})}\\
&= 2\ovq\pal q\times v &&\mbox{by Eqn.~(\ref{epsprop})}\\
&= 2\ovq\pal q v &&\mbox{by Eqn.~(\ref{Hmult})}\\
&= 2\widehat{\ovq\pal qv} &&\mbox{by Eqn.~(\ref{qhat})}\\
%\mbox{since }\ovq\pal qv\in\Hp\\
&= \ovq\pal qv-\overline{\ovq\pal q v} &&\mbox{since }w-\overline{w}
=2\widehat{w}\\
&= \ovq\pal qv+v\pal\ovq q &&\mbox{since }\overline{v}=-v\\
&= \ovq(\pal qv\ovq+qv\pal\ovq)q\quad &&\mbox{since }\ovq q=|q|^2=1\\
&= \ovq(\pal(qv\ovq))q &&\mbox{since }\pal v=0\\
&= \ovq(\pal\Re(q)v)q &&\mbox{by Eqn.~(\ref{R2def})}\\
&= \Re(q)^t\pal\Re(q)v &&\mbox{since }(\Re(q)w)^t=\ovq wq.
\end{alignat*}
As this is true for every $v\in\R^3\simeq\Hp$, this shows
\[ \ve(\Ke^l(q))=\Re(q)^t\pal\Re(q). \]
Multiplication with $\Re(q)$ from the left yields (\ref{Lie1}).\\
%\vspace*{2mm}
In order to show (\ref{Lie2}), multiplying (\ref{Kmdef}) with $q$
from the left yields
\[ 2\pam q=q\Ke^m(q). \]
Consequently,
\[ 2\pal\pam q=\pal q\Ke^m(q)+q\pal\Ke^m(q) \]
or equivalently
\[ q\pal\Ke^m(q)=2\pal\pam q-\pal q\Ke^m(q). \]
Multiplication of this identity with $\ovq$ from the left leads to
%%for the validity of the chain rule, see Corollary 3.8 in 'quaternions4.pdf',
%%e.g. Xu, Jahanchahi et al. "Quaternion derivatives", technical report
\[ \pal\Ke^m(q)=2\ovq\pal\pam q-\ovq\pal q\Ke^m(q). \]
With (\ref{Kmdef}), this shows (\ref{Lie2}). \qed
\end{proof}
\vspace*{2mm}
Applying the results of Lemma~\ref{lem1} to $\WWc$, it holds by
Eqns.~(\ref{Lie1}) and (\ref{epsdef}),
\begin{align}
\WWc(q) &= \mu_2\sum_{l=1}^3||\pal\Re(q)||^2=\mu_2\sum_{l=1}^3||\Re(q)
\ve(\Ke^l(q))||^2\nn\\
&=\mu_2\sum_{l=1}^3||\ve(\Ke^l(q))||^2\nn\\
&= 2\mu_2\sum_{l=1}^3\Big[({\Ke}_1^l(q))^2+({\Ke}_2^l(q))^2
+({\Ke}_3^l(q))^2\Big]\nn\\
\label{WcKe}
&=2\mu_2\sum_{l=1}^3|\Ke^l(q)|^2.
\end{align}
For the first derivative, using (\ref{Kmdef}) and (\ref{Lie2}), this results in
\begin{align}
\pam\WWc(q) &= 4\mu_2\sum_{l=1}^3\widehat{\pam\Ke^l(q)}\cdot\widehat{\Ke^l(q)}
\nn\\
\label{dWc}
&= 16\mu_2\sum_{l=1}^3\big[\ovq(\pam\pal q-\pam q\ovq\pal q)\big]\cdot\big[
\ovq\pal q\big].
\end{align}
%%Here, $\widehat{\Ke}=\Ke$. The widehat just indicates the pure quaternion

\section{Preconditioning}
\label{secprec}
When implementing the L-BFGS method for the Cosserat problem (\ref{Emin1}),
frequently situations are encountered where the algorithm
requires many iterations to converge. Also it may happen that the iteration is
stopped before a correct minimizer has been reached. Therefore, in this section,
certain modifications of the L-BFGS algorithm are discussed.
It is noteworthy that this does not only increase the speed of the code, but
may be an essential step to correctly compute the minimizers.\\
Starting point is the minimization problem (\ref{Emin1}) written as
\begin{equation}
\label{Emin2}
\cE(x)\to\min,
\end{equation}
where $x\!\in\!\R^D$ corresponds to a spatial discretization of
$(\phi,\!q,\!\gamma)$ by finite elements or finite differences.
The L-BFGS algorithm is a quasi-Newton method and constructs a minimizing
sequence $(x_k)_{k\in\N}\subset\R^D$ by setting
\begin{equation}
\label{liter}
\begin{aligned}
d_k &:= -H_k\nabla\cE(x_k),\\
x_{k+1} &:= x_k+\alpha d_k.
\end{aligned}
\end{equation}
Here, $H_k$ approximates the inverse Hessian $(D^2\cE(x_k))^{-1}$ and is
constructed from rank-one updates, $d_k$ is a descent direction, and
$\alpha\in\R$ is a parameter computed by a linesearch algorithm.
The iteration~(\ref{liter}) stops if for chosen small $\ve_0>0$
\begin{equation}
\label{lstop}
|\nabla\cE(x_k)|<\ve_0\max\{1,|x_k|\}.
\end{equation}
Letting
\begin{align*}
s_{k-1} &:=x_k-x_{k-1},\\
y_{k-1} &:=g_k-g_{k-1}:=\nabla\cE(x_k)-\nabla\cE(x_{k-1}),
\end{align*}
the {\it BFGS-update} is given by
\begin{align}
H_k =& H_{k-1}+\Big(\frac{y_{k-1}^tH_{k-1}y_{k-1}}{y_{k-1}^ts_{k-1}}+1\Big)
\frac{s_{k-1}s_{k-1}^t}{y_{k-1}^ts_{k-1}}\nn\\
\label{Hup1}
&-\frac{1}{y_{k-1}^ts_{k-1}}\Big[
s_{k-1}y_{k-1}^tH_{k-1}+H_{k-1}y_{k-1}s_{k-1}^t\Big]\\
=& \big(\Id-\rho_{k-1}s_{k-1}y_{k-1}^t)H_{k-1}
\big(\Id-\rho_{k-1}y_{k-1}s_{k-1}^t)\nn\\
& +\rho_{k-1}s_{k-1}s_{k-1}^t\nn\\
=:& V_{k-1}^tH_{k-1}V_{k-1}+\rho_{k-1}s_{k-1}s_{k-1}^t\nn\\
=& \big(V_{k-1}^t\ldots V_0^t\big)H_0\big(V_0\ldots V_{k-1}\big)
\label{Hup2}\\
& +\sum_{l=1}^{k-1}(V_{k-1}^t\ldots V_l^t)s_{l-1}s_{l-1}^t
\big(V_l\ldots V_{k-1}\big)+\rho_{k-1}s_{k-1}s_{k-1}^t\nn
\end{align}
with $\rho_l:=\frac{1}{y_l^ts_l}$ and $V_l:=\Id-\rho_ly_ls_l^t$.

\noindent In the limited-memory variant of (\ref{Hup1}),
the matrices $H_k$ are not
stored explicitly. Instead, given a small number $m\in\N$ and vectors
$s_0,\ldots,s_{m-1}$, $y_0,\ldots, y_{m-1}$, the multiplication
\[ H_k\nabla\cE(x_k) \]
is carried out by the two-loop iteration, see \cite{Nocedal80},\cite{LN89},
\begin{align}
& g_k:=\nabla\cE(x_k)\nn\\
& \mbox{FOR }i=m-1,\ldots,0\nn\\
& \qquad\alpha_i:=\rho_is_i^tg_k\nn\\
& \qquad g_k:=g_k-\alpha_iy_i\nn\\[3mm]
\label{Hmul}
& r_k:=H_k^0g_k\\[3mm]
& \mbox{FOR }i=0,\ldots,m-1\nn\\
& \qquad\beta_k:=\rho_iy_i^tr_k\nn\\
& \qquad r_k:=r_k+(\alpha_i-\beta_k)s_i\nn\\
& H_k\nabla\cE(x_k):=r_k\nn.
\end{align}
The first FOR-loop of the above scheme for determining
$r_k=H_kg_k$ computes and stores $\big(V_l\ldots V_{m-1}\big)g_k$ for
$0\le l\le m-1$. After carrying out the multiplication (\ref{Hmul}),
the second FOR-loop then computes (\ref{Hup2}).\\
The above scheme is considered one of the most effective update formulas of
numerical analysis and requires only ${\mathcal O}(mD)$ operations.
The parameter $m$ is usually chosen as $3\le m\le7$, see \cite{Byrd94},
and increasing $m$ further does not improve the quality of the update.
\\ In (\ref{Hmul}), for each iteration step $k$, one is free to pick $H_k^0$.
In the original implementation of the algorithm, in order to reduce the
condition numbers of $H_k$, the diagonal is scaled with the
{\it Cholesky factor} $\delta_k$, \cite{OL74},
\begin{equation}
\label{Cholesky}
H_k^0=\delta_k\Id,\qquad \delta_k:=\frac{s_{k-1}^ty_{k-1}}{y_{k-1}^ty_{k-1}}.
\end{equation}
Instead, another matrix or non-linear scheme such as a fixed point iteration
may be used in place of $H_k^0$ in (\ref{Hmul}) such that ideally,
$H_k^0\sim D^2\cE(x_k)$.
\\In order to find an efficient preconditioning method, it is helpful to study
the particular features of the Cosserat functional $\cE$. From physical insight
and numerical investigations, it is evident that the hardest part
in solving (\ref{Emin1}) is the computation of the optimal rotations, i.e.
finding the quaternion field $q$. Therefore, the following two-step
strategy for the solution of one time-step is effective:
\\
%\vspace*{2mm}
\noindent\underline{\bf Step~1 (Predictor)}: Fix $(\phi,\gamma)$.\\
Solve with the L-BFGS-method the optimization problem
\[ \cE_{\phi,\gamma}(q)\to\min. \]

%\vspace*{2mm}
\noindent\underline{\bf Step~2 (Corrector)}: Solve with the L-BFGS-method the
full problem (\ref{Emin2}). Pick the solution $q_\mathrm{opt}$ of Step~1 as
initial values for $q$.

\vspace*{2mm}
Typically, the solution of Step~1 is very fast in comparison to Step~2 since
far less variables need to be optimized and the complicated dependence of $q$
on $(\phi,\gamma)$ is eliminated.
Step~1 provides a reasonable approximation to the solution of the full problem
(\ref{Emin2}). In the conducted tests, the combined numerical costs for solving
Step~1 and Step~2 turned out significantly lower than for solving the original
minimization problem directly in one pass with the un-precon\-ditioned
L-BFGS method. This is discussed below in more detail.

In Step~1, $(\phi,\gamma)$ is fixed with data of the previous
time step. At the first time step, $\gamma$ is loaded with the initial values
$\gamma^0$ and $\phi$ is initialized with an extension of the
boundary values $g_D$ in $\overline{\Omega}$
that satisfies the Cauchy-Born rule.

Both Step~1 and Step~2 are preconditioned.
In Step~1, a special preconditioning matrix $Z$ replacing $H_k^0$ is chosen
that resembles the common discretization of the Laplace operator on structured
grids. Step~2 is preconditioned with the final converged matrix $H_k$ computed
in Step~1. As this matrix is obtained from a L-BFGS-procedure, it has a
data-sparse representation by vectors $(s_0,y_0),\ldots, (s_{m-1},y_{m-1})$.

In order to derive the preconditioning-matrix $Z$ of Step~1, recall the
computation of the total curvature energy by finite differences in 3D
\begin{equation}
\label{NC}
\io\WWc(q)\ddx\approx\frac{w}{8}\,\sum_{i=0}^{d_1}\sum_{j=0}^{d_2}
\sum_{k=0}^{d_3}N_{ijk}\WWc(q(y_{ijk}))
\end{equation}
used in \cite{Ble15}, where $N_{ijk}\in\N$ are
numerical weights derived from a Newton-Cotes formula,
$y_{ijk}\in\overline{\Omega}$ are points of the numerical mesh with equal
spacings
\begin{equation}
\label{xyz}
\eta_1:=\frac{L_1}{d_1},\qquad \eta_2:=\frac{L_2}{d_2},\qquad
\eta_3:=\frac{L_3}{d_3},
\end{equation}
$\Omega=(0,L_1)\times(0,L_2)\times(0,L_3)$ is assumed, $d_l\in\N$ is the
number of discretization points in direction $l$, $l=1,2,3$, and
$w:=\eta_1\eta_2\eta_3$ is an integration factor.
%% 'weight factor'

Since for the preconditioning matrix only a reasonably good approximation of the
second derivative is needed, in the following $N_{ijk}=8$ is assumed (the value
of $N_{ijk}$ in $\overline{\Omega}\setminus\partial\Omega$). First, let
\[ W_c(q):=2\mu_2|\pax q|^2. \]
Then, by a straightforward computation, for fixed subscripts
$0\le I\le d_1$, $0\le J\le d_2$, $0\le K\le d_3$ and fixed component
$0\le b\le3$ of $q$,
\begin{align}
\frac{\partial}{\partial q_{IJK}^b}&\io\WWc(q(x))\ddx\approx
\frac{w\mu_2}{2\eta_1^2}\frac{\partial}{\partial q_b^{IJK}}
\sum_i\Big(q_{i+1,J,K}^b-q_{i-1,J,K}^b\Big)^2\nn\\
&=\frac{w\mu_2}{\eta_1^2}\sum_i\Big(q_{i+1,J,K}^b
-q_{i-1,J,K}^b\Big)\Big(\delta_{i+1,I}-\delta_{i-1,I}\Big)\nn\\
\label{entries}
&=\frac{w\mu_2}{\eta_1^2}\Big(-q_{I-2,J,K}^b
+2q_{I,J,K}^b-q_{I+2,J,K}^b\Big)
\end{align}
with the short-hand notation $q_{ijk}\equiv q(y_{ijk})$.
In the same way the second derivative
\[ \frac{\partial^2}{(\partial q_{IJK}^b)^2}\int_\Omega\WWc(q(x))\ddx \]
can be computed. Let $D_1\widehat{=}(I,J,K)$ be the line index and
$D_2\widehat{=}(I_2,J_2,K_2)$ be the column index of the 2nd derivative
matrix $Z$. Then, from (\ref{entries}),
\[ Z_{D_1,D_2}:=\frac{w\mu_2}{\eta_1^2}\left\{\!\!\begin{array}{ll}
+2, \quad\mbox{ if }D_1=D_2,\\
-1, \quad\mbox{ if }|I-I_2|=2,\\
\;\;\;0, \quad\mbox{ otherwise}.\end{array}\right. \]
Likewise, if $\WWc$ is given by (\ref{Wc2def}), then up to a
pre-factor, $Z$ is $2$ on the diagonal, equals $-1$ if $|I-I_1|=2$ or
$|J-J_1|=2$ or $|K-K_1|=2$, and is $0$ otherwise.

\vspace*{2mm}
In the implementation, $Z$ is not stored explicitly. The multiplication $Zg$
for a vector $g\in\R^D$ is carried out by exploiting the band structure of
$Z$.

\section{Numerical tests}
\label{secnum}
Subsequently, different algorithms for the solution of (\ref{Emin1})
are investigated. First, the following general remarks are in place.

\begin{remark}
\label{rem1}
Following \cite{Ble15}, for small $\ve>0$, in (\ref{Emin1}) the modulus
$|\cdot|$ is replaced by
\[ r_\ve(x):=\left\{\!\!\begin{array}{rl} x, &\quad x>\ve,\\ x^2/\ve, & \quad
-\ve\le x\le+\ve,\\
-x, & \quad x<-\ve. \end{array}\right. \]
This removes the singularity at the origin and allows the application of
Newton's method.
\end{remark}

\begin{remark}
\label{rem2}
Since the quasi-Newton method applied in this article computes variations of $q$
that are not in $S^3$, the parameterization (\ref{para1}) is not applicable
unmodified in the numerical code. Instead, the mapping
\begin{equation}
\label{para1c}
{\footnotesize{
\wRe(q)}}
{\footnotesize{\!:=\!\frac{1}{|q|^2}\!\!\!
\left(\!\!\begin{array}{c@{\quad}c@{\quad}c}
q_0^2\!+\!q_1^2\!-\!q_2^2\!-\!q_3^2 & 2(q_1q_2\!-\!q_0q_3)&
2(q_1q_3\!+\!q_0q_2)\\
2(q_1q_2\!+\!q_0q_3)& q_0^2\!-\!q_1^2\!+\!q_2^2\!-\!q_3^2 &
2(q_2q_3\!-\!q_0q_1)\\
2(q_1q_3\!-\!q_0q_2)& 2(q_2q_3\!+\!q_0q_1)& q_0^2\!-\!q_1^2\!-\!q_2^2\!+\!q_3^2
\end{array}\!\!\right)}}
\end{equation}
is used which is defined for all $q\in\R^4\setminus\{0\}$. When minimizing
$\cE_\ve$, due to the term $\vL\big(|q|^2-1\big)^2$, the computed
optimal $q$ lies (approximately) in $S^3$.
%The evaluation of $\pal\Wc(q)$ for $q\in\R^4\setminus\{0\}$ with $\WWc(q)$
%given by (\ref{Wcdef}) leads to longer formulas. Setting
%\[ \pal\widehat{\Re}(q)=\sum_{b=0}^3\frac{\partial\wRe}{\partial q_b}(q)
%\pal q_b=:\sum_{b=0}^3 Q^b(q)\pal q_b,\qquad l=1,2,3, \]
%it holds
%\[ \WWc(q)=\mu_2\sum_{l=1}^3\Big\|\pal\wRe(q)\Big\|^2
%=\mu_2\sum_{l=1}^3\Big\|\sum_{b=0}^3Q^b(q)\pal q_b\Big\|^2.
%%= \mu_2\sum_{l=1}^3\sum_{r,s=0}^3\Bigg[\sum_{b=0}^3Q_{rs}^b(q)^2|\pal q_b|^2
%%+2\sum_{a<b}Q_{rs}^a(q)Q_{rs}^b(q)\pal q_a\pal q_b\Bigg].
%\]
%The numerical computation of $\pal\WWc(q)$ thus consists of 12 terms of the
%form $\pal Q^b(q)$ for $1\le l\le3$, $0\le b\le3$.
\end{remark}

\begin{remark}
\label{rem3}
All plastic deformations considered in this section satisfy $\det(\Fp)=1$.
Hence the plastic deformations preserve the volume.
\end{remark}
\subsection{Comparison of the parameterizations by Euler angles and
Euler-Rodrigues formula}
\label{subcomp}
The quaternion-based algorithm, due to its additional component in the
representation of $\Re$, requires about $14\%$ more computer memory.
Table~\ref{tab1} has the exact figures for different spatial resolutions.
Let {\it Algorithm~1} denote the algorithm of \cite{Ble15} which is
based on finite differences in 3D, the L-BFGS method, Euler angles, and the
curvature energy (\ref{Wcdef}), {\it Algorithm~2a} be the analogous
quaternion-based algorithm that solves (\ref{Emin1}); finally {\it Algorithm~2b}
be identical to Alg.~2a, but with the simplified curvature energy
\begin{equation}
\label{Wc2def}
\widetilde{W_c}(q):=2\mu_2\sum_{l=1}^3|\pal q|^2.
\end{equation}
This choice is motivated by the fact that Euler angles permit to write
(\ref{Wcdef}) as
\begin{equation}
\label{Wcalpha}
W_c(\alpha)=2\mu_2\sum_{l=1}^2|\pal\alpha|^2,
\end{equation}
see Eqn.~(\ref{WcKe}) or \cite{Ble14}.
As the numerical costs for computing (\ref{Wc2def}) and (\ref{Wcalpha}) are
very similar, this permits an unbiased comparison of the two parameterizations.
%%Alg~1 and Alg~2a on one hand and Alg~2b on the other hand %%solve different physical settings.

In \cite{Ble14}, a class of 3D analytic solutions to (\ref{Emin1}) is
calculated for an ultra-soft material with $\sigma_Y=\rho=0$ subject to the
boundary conditions
%% $\rho=\sigma_Y=0$ is needed to ensure beta(t)=gamma(t)
\begin{equation}
\label{shearBC}
D\phi(t)=\Id+\beta(t)\mathbf{m}\!\otimes\!\mathbf{n}
\qquad\mbox{on }\partial\Omega.
\end{equation}
This represents a simple shear problem for prescribed values $\beta(t)\in\R$.
The Cauchy-Born rule is valid here and (\ref{shearBC}) is satisfied
in $\overline{\Omega}$.

The above test constitutes a benchmark problem. The following simulation
compares the performance and speed of convergence for both Alg.~1 and Alg.~2.
The stopping criterion is (\ref{lstop}) with $\ve_0:=10^{-7}$.

\vspace*{-5mm}
\begin{align*}
& \mbox{{\bf Parameters (Benchmark test)}:}\\[-1mm]
& \Omega=(0,1)^3,\; t\in[0,1],\;\mu=10^4,\;\mu_c=2\cdot10^4,\\[-1mm]
&
\mu_2:=\mu\frac{L_c^2}{2}=100,\;\lambda=10^3,\;
\rho=\sigma_Y=\fext=0,\; \Mext=\mathbf{0},\\[-1mm]
& \mathbf{m}=(1,0,0)^t,\;\mathbf{n}=(0,1,0)^t,\;
\beta(t)=0.25*t,\; h=0.1,\\[-1mm]
&\; \ve=10^{-4},\; \vL=20,\;q_D=(1,0,0,0).\\
& \mbox{\it Initial values: }\phi_0\equiv\Id,\;\kappa^0=\gamma^0=0
\mbox{ in }\Omega.
\end{align*}

%\vspace*{-5mm}
%\begin{align*}
%& \mbox{{\bf Parameters (Benchmark test)}:}\\[-1mm]
%& \Omega=(0,1)^3,\; t\in[0,1],\;\mu=10^4,\;\mu_c=2\cdot10^4,
%\;\mu_2:=\mu\frac{L_c^2}{2}=100,\;\lambda=10^3,\;
%\rho=\sigma_Y=\fext=0,\; \Mext=\mathbf{0},\\[-1mm]
%& \mathbf{m}=(1,0,0)^t,\;\mathbf{n}=(0,1,0)^t,\;
%\beta(t)=0.25*t,\; h=0.1,\; \ve=10^{-4},\; \vL=20,\;q_D=(1,0,0,0).\\
%& \mbox{\it Initial values: }\phi_0\equiv\Id,\;\kappa^0=\gamma^0=0
%\mbox{ in }\Omega.
%\end{align*}

\noindent{\bf Results}: $\gamma(\cdot,t)=\beta(t)$, $\Re=\Ue=\Id$,
$\Wst=\WWc=0$ in $\overline{\Omega}$,

\hspace*{34pt}$\phi(x,t)=(x_1+\beta(t)x_2,x_2,x_3)$ in $\Omega$, i.e.
the validity of the Cauchy-Born rule.

\vspace*{3mm}
Table~\ref{tab2} summarizes the required number of iterations and computation
times for all variants. The stopping criterion is (\ref{lstop}) with
$\ve_0:=10^{-7}$. As can be seen, Alg.~2b requires about $20\%$ less iterations,
Alg.~2a about $10\%$ less iterations than Alg.~1.
This behavior is typical. In our numerical tests, the
quaternion-based algorithms reveal superior convergence.
Table~\ref{tab3} illustrates the deviation of the numerical
solution from the constraint $|q|=1$.
%\vspace*{3mm}
\subsection{The effect of preconditioning}
\label{subpc} 
This section conducts numerical tests of the preconditioning strategy presented in Section~\ref{secprec}. While for large values of the stop parameter
$\ve_0$ the code usually converges after a small number of iterations,
preconditioning becomes mandatory when $\ve_0$ is chosen small.
Fig~\ref{fig1} demonstrates that reducing $\ve_0$ may go along
with an exponential increase of the number of iterations.
Simultaneously, fine properties of the physical solution may be
missed when $\ve_0$ is set too large, cf. also Table~\ref{tab3}.
The following bending problem of a 3D rod, see \cite[Eqn.~(27)]{Ble13},
serves as a test problem. For given $\beta(t)$ as in (\ref{shearBC}), $\phi$ at
$\partial\Omega$ is prescribed by
\begin{equation}
\label{phiwarp}
g_D^\mathrm{bend}(x_1,x_2,x_3,t):=\left(\begin{array}{c}x_1\\
x_2+\frac{2L_1}{\pi}\Big[\sin\Big(\frac{3\pi}{2}+\frac\pi2\frac{x_1}{L_1}\Big)+1
\Big]\beta(t)\\ x_3\end{array}\right).
\end{equation}
In order to determine the boundary conditions on $q$, let
\[ R_D^\mathrm{bend}:=\mathrm{polar}(Dg_D^\mathrm{bend}\Fp^{-1}), \]
where $\mathrm{polar}(\cdot)$ is the polar decomposition, computed with the
algorithm in \cite{Dongarra79}. Then set
\[ q=q_D^\mathrm{bend}\mbox{ on }\partial\Omega, \]
where $\Re(q_D^\mathrm{bend})=R_D^\mathrm{bend}$ and $q_D^\mathrm{bend}$ is
computed from $R_D^\mathrm{bend}$ with the algorithm in
\cite{Shoe85}. %%Appendix I.2

\begin{align*}
& \mbox{{\bf Parameters (Bending problem)}:}\\[+1mm]
& \Omega=(0,5)\!\times\!(0,1)\!\times\!(0,2),\;t\in[0,1],\;
\lambda=\mu=0.025,\\&\;\mu_c=0.4,\;\mu_2=0.02;\;\rho=\sigma_Y=\fext=0,\;
\Mext=\mathbf{0},\\%[-2mm]
& \mathbf{m}=(1,0,0)^t,\;\mathbf{n}=(0,1,0)^t,\;
\beta(t)=0.25*t,\;h=0.1,\\&\;\ve=10^{-4},\;\vL=20,\;
\WWc(q)=2\mu\sum_{l=1}^3|\pal q|^2,\\[-2mm]
& \mbox{\it Initial values: }\phi_0\equiv\Id,\;\kappa^0=\gamma^0=0
\mbox{ in }\Omega.\\%[-1mm]
& \mbox{\it Boundary values: }\phi=g_D^\mathrm{bend},\;
q=q_D^\mathrm{bend}\mbox{ on }\partial\Omega.
\end{align*}

\vspace*{2mm}
\noindent{\bf Results}:
$\gamma(x,t)=\sin(\frac\pi2\frac{x_1}{L_1})\beta(t)$, $q=(1,0,0,q_4)$,
%% corresponds to $\alpha_2=\alpha_3\equiv0$,
$\Ue=\Id$,\\
\hspace*{36pt}$\Wst\equiv0$, $\phi=g_D^\mathrm{bend}$ in $\overline{\Omega}$.

\vspace*{5mm}
Table~\ref{tab4} compares the numerical costs for solving the first time
step of the bending problem with the original L-BFGS-algorithm (where
$H^0_k$ is defined by (\ref{Cholesky})) and with the preconditioned two-step
L-BFGS-algorithm of Section~\ref{secprec} when $\ve_0:=10^{-11}$.
Again, this behavior is typical. In our numerical tests, the two-step
preconditioner leads to a significant speed-up, often accompanied
with increased precision.

\section{Discussion}
\label{secdis}
In this paper, a parameterization by quaternions is applied to a strongly
non-linear finite-strain Cosserat model of plastic materials, 
possibly with microstructure. Despite
increased memory requirements, in the conducted numerical tests the
quaternion-based algorithm needed less iterations and converged faster.
%%exception: in \ref{tab2}, Alg.2a is 0.02s slower than Alg. 1
As main result, a novel two-level preconditioning scheme is proposed
that exploits the physical properties of the Cosserat model.
The preconditioner solves a simplified problem for $\Re$ with fixed
$(\phi,\gamma)$ which represents the most complicated step in computing a global
minimizer of $\cE_\ve$. Note that the degrees of freedom in the micro-rotations
are responsible for the occurrence of a large number of local minima.
With this reasonably good guess for $\Re$, the preconditioned algorithm
is eventually able to succeed to a global minimizer.
The preconditioning strategy is compatible with the L-BFGS update scheme
and can be regarded as a non-linear preconditioning technique.
Numerical tests support that this scheme significantly reduces the
algorithmic costs and is essential to computing the physical solution
when high precision is required.
Similar two-step L-BFGS-algorithms may also be applicable to other classes of
problems that depend in an un-symmetrical way on its variables.
Fig.~\ref{fig2} documents a further important numerical feature:
Since the energy landscape of $\cE_\ve$ consists of many flat plateaus, the
L-BFGS-scheme stagnates for a long time with each step only slightly decreasing
$\cE_\ve$. It is unknown if and when an iteration significantly decreases
the energy. When $\ve_0$ in (\ref{lstop}) is taken too large, the algorithm may
wrongly interpret this stagnation as convergence. It would be desirable to
have analytic results on the choice of $\ve_0$, or even better an algorithm
that is capable to prevent this stagnation period.
Finally, it may be desirable to develop a specialized L-BFGS algorithm that
restricts the variations of the functional w.r.t. certain variables to
the tangent space.

We address the above generalizations and enrichments of the numerical model
presented in this study, together with the analysis of the more general case of
gradient-type plasticity and hysteretic response under general loading,
\cite{Gurtin02,CHM2002,bardellaJMPS}, to future work.
Additional future research lines will be devoted to applying the current
Cosserat model to bending-dominated lattices with plastic behavior which
exhibit arbitrarily large size effects and consist, e.g., of cubical
modules/particles connected by deformable links
or Sarrus linkages tessellating triangular lattice structures \cite{ZMP17}.
Physical models of such systems will be fabricated through AM in ductile
materials, \cite{pentamode}. These mockups will be laboratory-tested
in order to validate the accuracy of numerical simulations and to demonstrate
the presence of size effects that cannot be described through classical
continuum or homogenization theories. 
Recent results revealing metamaterial-type behaviors of the above systems,
which are related to auxetic response \cite{ZMP17} and/or high strength
effects induced by bending and twisting of the material, will be
extended to the plastic regime on accounting for a ductile response of the
background material.

%\newpage
\section*{Figure captions}

{\renewcommand{\arraystretch}{1.20}
\begin{table}[h!t]
\begin{center}\tabcolsep=0.45mm
\begin{tabular}{|c|c|r|r|c|}\hline
\multicolumn{1}{|c|}{\mbox{\footnotesize Parameterizations }} &
\multicolumn{1}{|c|}{\mbox{\footnotesize Resolution }} &
\multicolumn{1}{|c|}{\mbox{\footnotesize No. }} &
\multicolumn{1}{|c|}{\mbox{\footnotesize No. }} &
\multicolumn{1}{|c|}{\mbox{\footnotesize Memory }}\\
\multicolumn{1}{|c|}{\mbox{\footnotesize }} &
\multicolumn{1}{|c|}{\mbox{\footnotesize  $(d_1,d_2,d_3)$}} &
\multicolumn{1}{|c|}{\mbox{\footnotesize Unknowns}} &
\multicolumn{1}{|c|}{\mbox{\footnotesize Nodes}} &
\multicolumn{1}{|c|}{\mbox{\footnotesize  (MB)}}\\
\hline\hline
{\footnotesize Euler angles}& {\footnotesize$ (32,32,32)$}    & {\footnotesize$ 214683$}     & {\footnotesize$ 35937$}    & {\footnotesize$ 1.64$}\\ \hline
{\footnotesize Euler angles} & {\footnotesize$(64,64,64)$}    & {\footnotesize$1774907$}    & {\footnotesize$274625$}   & {\footnotesize$13.5$}\\ \hline
{\footnotesize Euler angles} & {\footnotesize$(128,128,128)$} & {\footnotesize$14436987$}   & {\footnotesize$2146689$}  & {\footnotesize$110$}\\ \hline
{\footnotesize Euler angles} & {\footnotesize$(256,256,256)$} & {\footnotesize$116462843$}  & {\footnotesize$16974593$} & {\footnotesize$889$}\\ \hline
\hline
{\footnotesize Quaternions} & {\footnotesize$(32,32,32)$}     & {\footnotesize$244476$}     & {\footnotesize$35937$}    & {\footnotesize$1.86$}\\ \hline
{\footnotesize Quaternions} & {\footnotesize$(64,64,64)$}     & {\footnotesize$2024956$}    & {\footnotesize$274625$}   & {\footnotesize$15.4$}\\ \hline
{\footnotesize Quaternions} & {\footnotesize$(128,128,128)$}  & {\footnotesize$16485372$}   & {\footnotesize$2146689$}  & {\footnotesize$126$}\\ \hline
{\footnotesize Quaternions} & {\footnotesize$(256,256,256)$}  & {\footnotesize$133044220$}  & {\footnotesize$16974593$} & {\footnotesize$1015$}\\ \hline
\end{tabular}
\end{center}
\vspace{-2mm}
\caption{\label{tab1}
Comparison between the parameterizations by Euler angles (\ref{para2}) and by
quaternions (\ref{para1}). 'No. Unknowns' is the total number of unknowns in
the discrete model, 'Memory' the total memory for storing the data in case of
$64$ bit precision, 'No Nodes' the total number of discretization
points in the finite difference mesh.}
\end{table}}

%{\renewcommand{\arraystretch}{1.20}
%\begin{table}[h!t]
%\begin{center}
%\begin{tabular}{|c|c|r|r|c|}\hline
%\multicolumn{1}{|c|}{\mbox{\footnotesize Parameterization}} &
%\multicolumn{1}{|c|}{\mbox{\footnotesize Resolution $(d_1,d_2,d_3)$}} &
%\multicolumn{1}{|c|}{\mbox{\footnotesize No. Unknowns}} &
%\multicolumn{1}{|c|}{\mbox{\footnotesize No. Nodes}} &
%\multicolumn{1}{|c|}{\mbox{\footnotesize Memory (MB)}}\\
%\hline\hline
%Euler angles & $(32,32,32)$    & $214683$     & $35937$    & $1.64$\\ \hline
%Euler angles & $(64,64,64)$    & $1774907$    & $274625$   & $13.5$\\ \hline
%Euler angles & $(128,128,128)$ & $14436987$   & $2146689$  & $110$\\ \hline
%Euler angles & $(256,256,256)$ & $116462843$  & $16974593$ & $889$\\ \hline
%\hline
%Quaternions & $(32,32,32)$     & $244476$     & $35937$    & $1.86$\\ \hline
%Quaternions & $(64,64,64)$     & $2024956$    & $274625$   & $15.4$\\ \hline
%Quaternions & $(128,128,128)$  & $16485372$   & $2146689$  & $126$\\ \hline
%Quaternions & $(256,256,256)$  & $133044220$  & $16974593$ & $1015$\\ \hline
%\end{tabular}
%\end{center}
%\vspace{-2mm}
%\caption{\label{tab1}
%Comparison between the parameterizations by Euler angles (\ref{para2}) and by
%quaternions (\ref{para1}). 'No. Unknowns' is the total number of unknowns in
%the discrete model, 'Memory' the total memory for storing the data in case of
%$64$ bit precision, 'No Nodes' the total number of discretization
%points in the finite difference mesh.}
%\end{table}}

{\renewcommand{\arraystretch}{1.20}
\begin{table}[h!t]
\begin{center} \tabcolsep=0.7mm
\begin{tabular}{|c|c|c|c|c|}\hline
\multicolumn{1}{|c|}{\mbox{\footnotesize Algorithm}} &
\multicolumn{1}{|c|}{\mbox{\footnotesize $10\times10\times10$}} &
\multicolumn{1}{|c|}{\mbox{\footnotesize $30\times30\times30$}} &
\multicolumn{1}{|c|}{\mbox{\footnotesize $50\times50\times50$}} &
\multicolumn{1}{|c|}{\mbox{\footnotesize $70\times70\times70$}}\\
\hline\hline

{\footnotesize Alg.~1}  & {\footnotesize$359\;(0.91$s$)$}  & {\footnotesize$1022\;(119$s$)$} & {\footnotesize $1228\;(723$s$)$}
& {\footnotesize$1383\;(2040$s$)$} \\\hline
{\footnotesize Alg.~2a} & {\footnotesize $332\;(0.93$s$)$} & {\footnotesize $919\;(107$s$)$}  & {\footnotesize $1150\;(650$s$)$}
& {\footnotesize$1239\;(1973$s$)$}\\\hline
{\footnotesize Alg.~2b} & {\footnotesize $299\;(0.89$s$)$}  & {\footnotesize $812\;(81$s$)$}   & {\footnotesize $891\;(541$s$)$}
& {\footnotesize $1119\;(1370$s$)$}\\\hline
\end{tabular}
\end{center}
\vspace{-2mm}
\caption{\label{tab2}
Averaged number of iterations for Alg.~1 and Alg.~2a/b for different spatial
resolutions $d_1\!\times\!d_2\!\times\!d_3$ and the benchmark problem
over the time interval $[0,1]$. Averaged computation times are in brackets.}
\end{table}}

\begin{figure}[h!ptb]
%\centering
%\begin{tabular}{ccc}%@{\hspace{-0.3cm}}
%%%%%%%%%%%%%%%%%%%%%%%%%%%%%%%%%%%%%%%
\includegraphics[width=8.0cm]{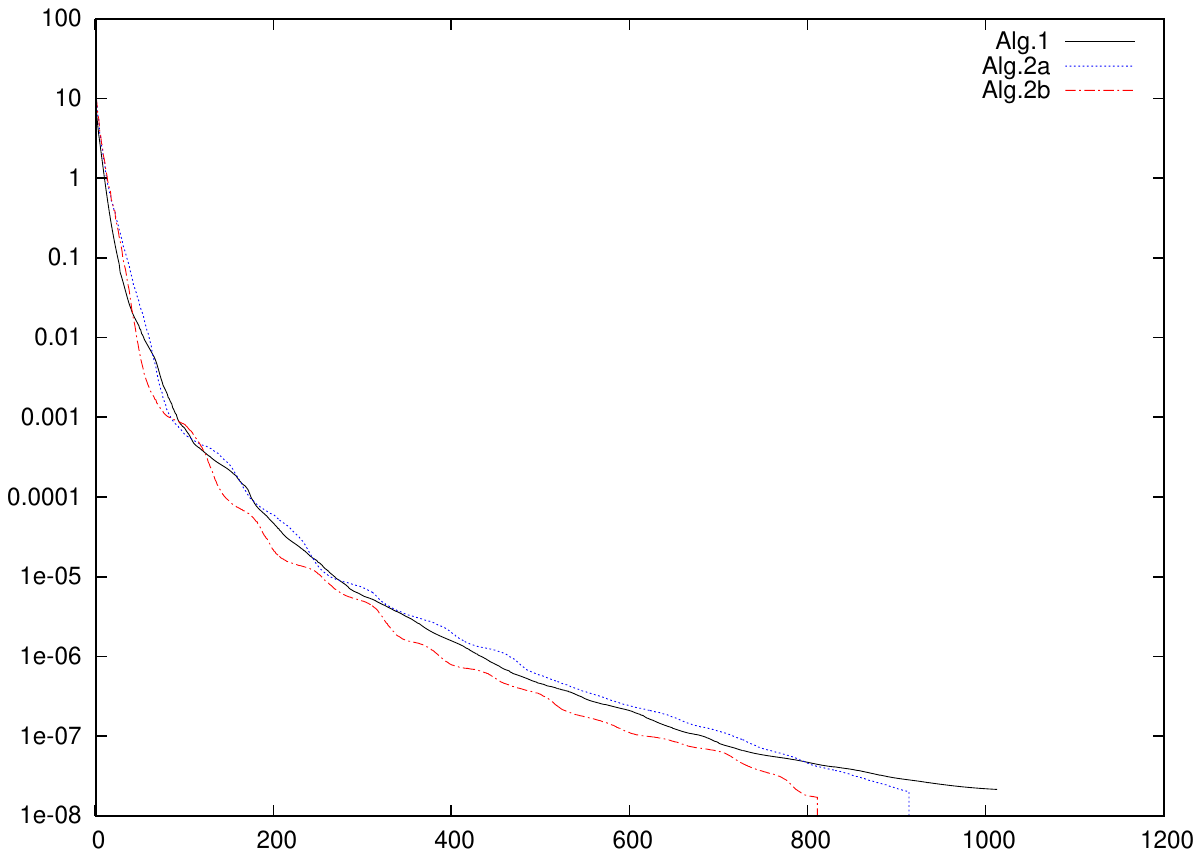} 
\includegraphics[width=8.0cm]{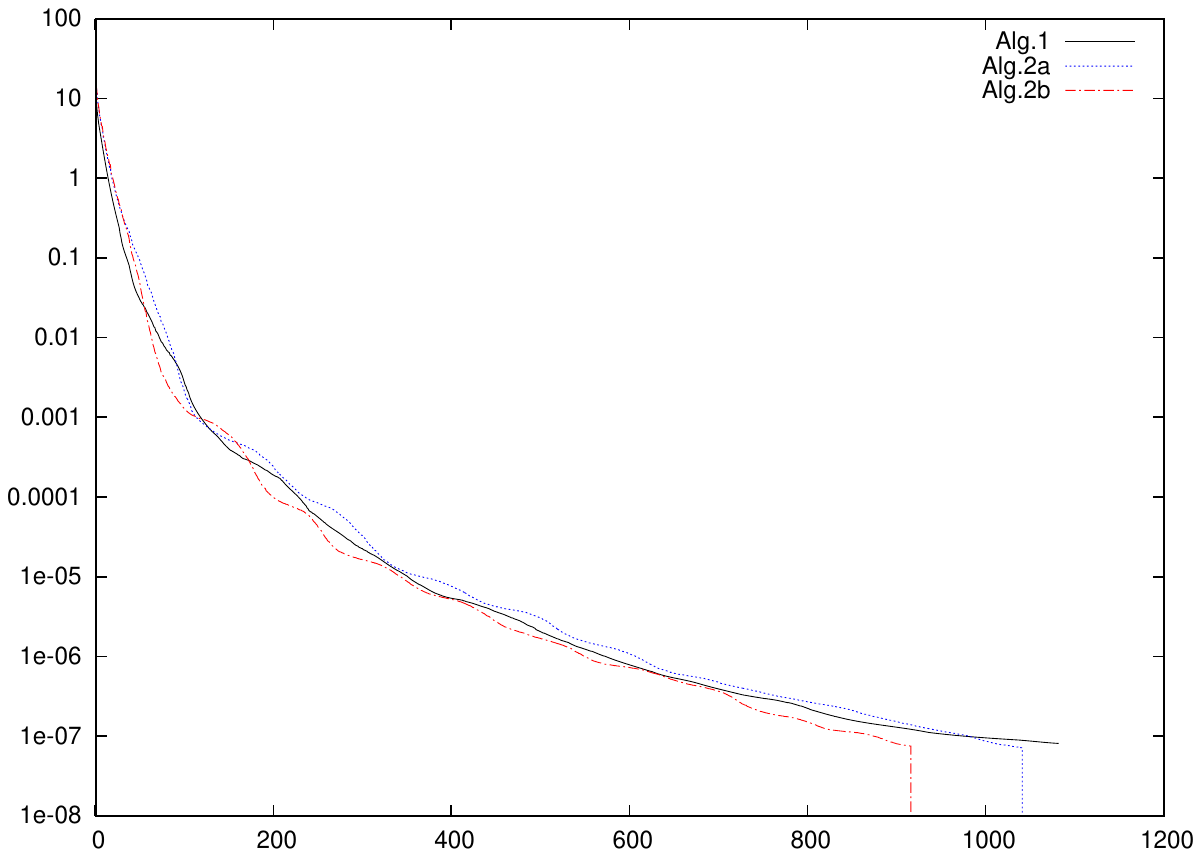} 
\includegraphics[width=8.0cm]{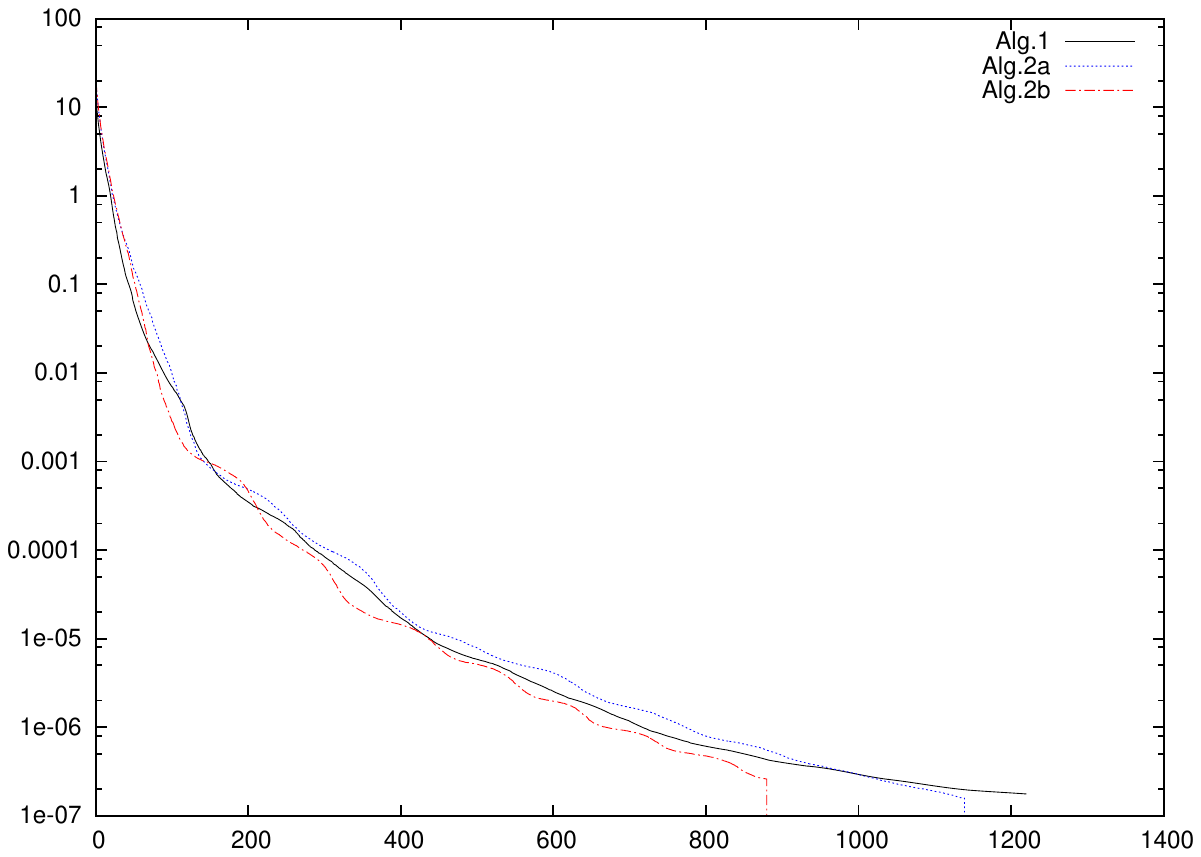}
%\end{tabular}
\caption{
Convergence of Alg.~1 and the two variants of Alg.~2 for the first time step of
the benchmark problem. The values of $\cE_\ve$ are rendered on the
ordinate as a function of the L-BFGS-iterations on the abscissa.
Top left: Spatial resolution $d_1=d_2=d_3=30$.
Top right: Resolution $d_1=d_2=d_3=40$. Bottom: Resolution $d_1=d_2=d_3=50$.
The exact solution in all cases is $\cE_\ve=0$.}
\label{fig1}
\end{figure}

{\renewcommand{\arraystretch}{1.40}
\begin{table}[h!bt]
\begin{center} \tabcolsep=0.35mm
\begin{tabular}{|c|c|c|c|c|}\hline
\multicolumn{1}{|c|}{\mbox{\footnotesize Algorithm}} &
\multicolumn{1}{|c|}{\mbox{\footnotesize $10\times10\times10$}} &
\multicolumn{1}{|c|}{\mbox{\footnotesize $30\times30\times30$}} &
\multicolumn{1}{|c|}{\mbox{\footnotesize $50\times50\times50$}} &
\multicolumn{1}{|c|}{\mbox{\footnotesize $70\times70\times70$}}\\
\hline\hline
{\footnotesize Alg.~2a}\quad & {\footnotesize$8.42\cdot10^{-9}$} & {\footnotesize$4.64\cdot10^{-8}$}
& {\footnotesize$5.01\cdot10^{-7}$} & {\footnotesize$1.04\cdot10^{-6}$}\\[-2mm]
\quad{\footnotesize($\ve_0=10^{-7}$)} &   &  
&   &  \\\hline
{\footnotesize Alg.~2a}\quad & {\footnotesize$8.21\cdot10^{-14}$} & {\footnotesize$4.81\cdot10^{-12}$}
& {\footnotesize$1.22\cdot10^{-12}$} & {\footnotesize$3.05\cdot10^{-11}$}\\[-2mm]
\quad{\footnotesize( $\ve_0=10^{-9}$)} &   &  
&   &  \\\hline
\end{tabular}
\end{center}
\vspace{-2mm}
\caption{\label{tab3}
Value of $\max\limits_{t\in[0,1]}\io\Lambda(|q(x,t)|^2-1)^2\ddx$ for different
spatial resolutions, two stop values (cf. Eqn.~(\ref{lstop})), and Alg.~2a.}
\end{table}}

%{\renewcommand{\arraystretch}{1.20}
%\begin{table}[h!t]
%\begin{center}
%\begin{tabular}{|c|r|r|r|r|}\hline
%\multicolumn{1}{|c|}{\mbox{\footnotesize Resolution}} &
%\multicolumn{1}{|c|}{\mbox{\footnotesize Iterations L-BFGS}} &
%\multicolumn{1}{|c|}{\mbox{\footnotesize Iterations pc-L-BFGS}} &
%\multicolumn{1}{|c|}{\mbox{\footnotesize Time L-BFGS}} &
%\multicolumn{1}{|c|}{\mbox{\footnotesize Time pc-L-BFGS}}\\
%\hline\hline
%$10\times10\times10$ & $22166995$  & $41/13554468$ & $383\mbox{min}\;28$s
%& $234\mbox{min}\;37$s \\\hline
%$20\times20\times20$ & $15773252$  & $133/162642$ & $2656\mbox{min}\;31$s
%& $24\mbox{min}\;56$s \\\hline
%$30\times30\times30$ & $62300391$  & $269/229012$ & $41131\mbox{min}\;17$s
%& $128\mbox{min}57$s\\\hline
%%$40\times40\times40$ & $$  & $392/$   & $$ & $$\\\hline
%\end{tabular}
%\end{center}
%\vspace{-2mm}
%\caption{\label{tab4}
%The first time step of the bending problem for the original ('L-BFGS') and the
%preconditioned ('pc-L-BFGS') scheme in comparison for $\ve_0=10^{-11}$.
%For the preconditioned scheme, both predictor and corrector iterations are
%listed. 'Time' is the total computation time for the solution of one time step.}
%\end{table}}

{\renewcommand{\arraystretch}{1.20}
\begin{table}[h!t]
\begin{center}\tabcolsep=0.45mm
\begin{tabular}{|c|r|r|r|r|}\hline
\multicolumn{1}{|c|}{\mbox{\footnotesize Resolution}} &
\multicolumn{1}{|c|}{\mbox{\footnotesize Iterations }} &
\multicolumn{1}{|c|}{\mbox{\footnotesize Iterations }} &
\multicolumn{1}{|c|}{\mbox{\footnotesize Time }} &
\multicolumn{1}{|c|}{\mbox{\footnotesize Time }}\\[-2mm]
\multicolumn{1}{|c|}{\mbox{\footnotesize }} &
\multicolumn{1}{|c|}{\mbox{\footnotesize L-BFGS}} &
\multicolumn{1}{|c|}{\mbox{\footnotesize  pc-L-BFGS}} &
\multicolumn{1}{|c|}{\mbox{\footnotesize L-BFGS}} &
\multicolumn{1}{|c|}{\mbox{\footnotesize pc-L-BFGS}}\\
\hline\hline
{\footnotesize$10\times10\times10$} & {\footnotesize$22166995$}  & {\footnotesize$41/13554468$} & {\footnotesize$383\mbox{min}\;28$s}
& {\footnotesize$234\mbox{min}\;37$s} \\\hline
{\footnotesize$20\times20\times20$} & {\footnotesize$15773252$}  & {\footnotesize$133/162642$} & {\footnotesize$2656\mbox{min}\;31$s}
& {\footnotesize$24\mbox{min}\;56$s} \\\hline
{\footnotesize$30\times30\times30$} & {\footnotesize$62300391$}  & {\footnotesize$269/229012$} & {\footnotesize$41131\mbox{min}\;17$s}
& {\footnotesize$128\mbox{min}57$s}\\\hline
%$40\times40\times40$ & $$  & $392/$   & $$ & $$\\\hline
\end{tabular}
\end{center}
\vspace{-2mm}
\caption{\label{tab4}
The first time step of the bending problem for the original ('L-BFGS') and the
preconditioned ('pc-L-BFGS') scheme in comparison for $\ve_0=10^{-11}$.
For the preconditioned scheme, both predictor and corrector iterations are
listed. 'Time' is the total computation time for the solution of one time step.}
\end{table}}
%%see ~/cprog/rexx/TTT10_pc, ~/cprog/rexx/TTT10_nopc
%%see ~/cprog/rexx/TTT20_pc, ~/cprog/rexx/TTT20_nopc
%%see ~/cprog/rexx/TTT30_pc, ~/cprog/rexx/TTT30_nopc

\begin{figure}[h!ptb]
\includegraphics[width=8.5cm]{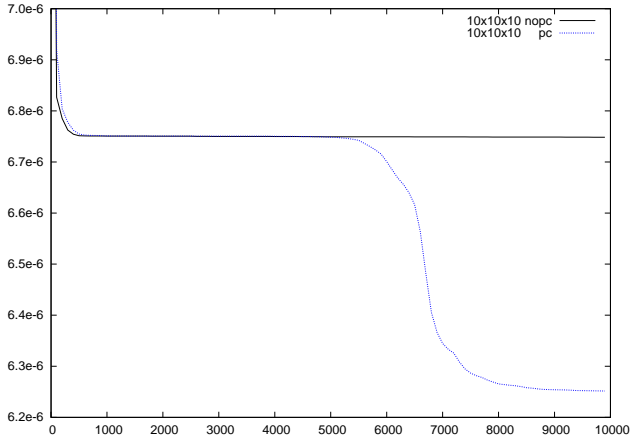}
\caption{
The progression of $\cE_\ve$ (ordinate) for the first $10000$
L-BFGS-iterations (abscissa) of the original L-BFGS method (black) and the
preconditioned L-BFGS method (blue) for $d_1\!=\!d_2\!=\!d_3\!=\!10$,
$\ve_0\!=\!10^{-11}$, and the first time step of the bending problem.
As can be seen, even the preconditioned algorithm
requires many iterations to overcome local minima of the energy.
}
\label{fig2}
%% Non-preconditioned run ~/cpro/rexx/rq6x_v1.c
%% preconditioned run at ~/cprog/rexx/rq6x.c
%% Output data generated with script -a, see Files ~/cprog/rexx/TTT10pc, ...
%% data compressed to one file with the help of ~/cprog/rexx/rqconv.c
%% Gnuplots created with ~/ownCloud/gnuplot/TTT10 and ~/ownCloud/gnuplot/TTT1020
\end{figure}

\vspace*{-3mm}
\begin{acknowledgements}
Part of this article was written while TB visited the Hausdorff Research
Institute for Mathematics (HIM), University of Bonn, in 2019. This visit was
supported by the HIM. TB gratefully acknowledges both this support and the
hospitality of HIM. AA gratefully acknowledges financial support from the
Italian Ministry of Education, University and Research (MIUR) under the
`Departments of Excellence' grant L.232/2016.
\end{acknowledgements}

\section*{Appendix - List of symbols}
\vspace*{-8mm}
\begin{Tabbing}
\TAB= \hspace*{40pt} \TAB= \hspace*{210pt} \TAB= \hspace*{40pt} \TAB= \kill \\
\TAB>$A\!:\!B$ \TAB>tensor product of $A$, $B$, below (\ref{Vdef})\\
\TAB>$\mathbf{u}\!\cdot\!\mathbf{v}$ \TAB>inner product of $\mathbf{u}$,
$\mathbf{v}\in\R^3$\\
\TAB>$\sym(\sigma)$ \TAB>symmetric part of a tensor $\sigma$, (\ref{Wstdef})\\
\TAB>$\skw(\sigma)$\TAB>skew-symmetric part of $\sigma$, (\ref{Wstdef})\\
\TAB>$\tr(\sigma)$ \TAB>trace of tensor $\sigma$\\
\TAB>$\sigma^t$ \TAB>transpose of $\sigma$;
$R^t\!=\!R^{-1}$ for $R\!\in\!\SO(3)$\\
\TAB>$\|\cdot\|$ \TAB>Frobenius matrix norm, (\ref{Wstdef})\\
\TAB>$|\cdot|$ \TAB>Euclidean vector norm in $\R^4$, (\ref{qnorm})\\
\TAB>$\Omega\subset\R^3$ \TAB>reference domain, undeformed solid\\
\TAB>$(x,t)$ \TAB>space and time coordinates\\
\TAB>$\phi$ \TAB>deformation vector of the solid, (\ref{split})\\
\TAB>$F\!=\!D\phi$ \TAB>deformation tensor, (\ref{split})\\
\TAB>$\Fe$ \TAB>elasticity tensor, (\ref{split})\\
\TAB>$\Fp$ \TAB>plasticity tensor, (\ref{split})\\
\TAB>$\Re$ \TAB>rotation tensor, (\ref{para2}), (\ref{para1}), (\ref{split})\\
\TAB>$\Ue$ \TAB>(right) stretching tensor, (\ref{split})\\
\TAB>$\Ke$ \TAB>(right) curvature tensor, (\ref{Kmdef})\\
\TAB>$\Id$ \TAB>identity tensor, $(\Id)_{kl}=(\delta_{kl})_{kl}$,
(\ref{Fpdef})\\
\TAB>$\alpha$ \TAB>Euler angle parameterization of $\Re$, (\ref{para2})\\
\TAB>$\gamma$ \TAB>single-slip parameterization of $\Fp$, (\ref{Fpdef})\\
\TAB>$q$ \TAB>Quaternion parameterization of $\Re$, (\ref{para1})\\
\TAB>$q_D$ \TAB>Dirichlet boundary values of $q$, (\ref{BC})\\
\TAB>$\cE$ \TAB>mechanical energy, (\ref{Emin1})\\
\TAB>$h>0$ \TAB>discrete (fixed) time step, (\ref{Emin1})\\
\TAB>$\gamma^0$ \TAB>values of $\gamma$ at old time $t$, (\ref{kapdef})\\
\TAB>$\kappa^0$ \TAB>values of $\kappa$ at old time $t$, (\ref{kapdef})\\
\TAB>$\kappa$ \TAB> dislocation density, (\ref{kapdef})\\
\TAB>$V(\kappa)$ \TAB>dislocation energy, (\ref{Vdef})\\
\TAB>$\Wst$ \TAB>stretching energy, (\ref{Wstdef})\\
\TAB>$\WWc$ \TAB>curvature energy, (\ref{Wcdef})\\
\TAB>$X$ \TAB>back stress (dual variable to $\Fp$), (\ref{LFdual}),\\
\TAB>$\xi$ \TAB>hardening (dual variable to $\kappa$), (\ref{LFdual})\\
\TAB>$\fext$ \TAB>external volume forces, (\ref{Emin1})\\
\TAB>$\Mext$ \TAB>external volume couples, (\ref{Emin1})\\
\TAB>$\sigma_Y$ \TAB>yield stress, (\ref{Emin1})\\
\TAB>$Q^*$ \TAB>dissipated energy, (\ref{Qstardef})\\
\TAB>$\mathbf{m}$ \TAB>slip vector, (\ref{Fpdef})\\
\TAB>$\mathbf{n}$ \TAB>slip normal, (\ref{Fpdef})\\
\TAB>$\rho>0$ \TAB>dislocation energy constant, (\ref{Emin1})\\
\TAB>$g_{_D}$ \TAB>Dirichlet boundary values of $\phi$, (\ref{BC})\\
\TAB>$\ve>0$ \TAB>regularization of $|\cdot|$, Remark~\ref{rem1}\\
\TAB>$\Lambda>0$ \TAB>Lagrange parameter to $|q|^2=1$, (\ref{Emin1})\\
\TAB>$\lambda$, $\mu$ \TAB> Lam{\'e} parameters, (\ref{Wstdef})\\
\TAB>$\mu_c$ \TAB>Cosserat couple modulus, (\ref{Wstdef})\\
\TAB>$L_c$ \TAB> internal length scale, (\ref{Wcdef})\\
\TAB>$\mu_2$ \TAB> parameter $\mu$ scaled by $L_c^2$, (\ref{Wcdef})\\
\TAB>$d_1,\!d_2,\!d_3$ \TAB>spatial resolution, (\ref{NC})\\
\TAB>$\eta_1\!,\!\eta_2,\!\!\eta_3$ \TAB>points on the numerical mesh,
(\ref{xyz})\\
\TAB>$N_{IJK}$ \TAB>discrete numerical weights, (\ref{NC}) \\
\TAB>$\beta(t)$ \TAB>deformation parameter, (\ref{shearBC}), (\ref{phiwarp}).
\end{Tabbing}

\section*{Compliance with Ethical Standards}
The authors declare that they have no conflict of interest.

% BibTeX users please use one of
%\bibliographystyle{spbasic}      % basic style, author-year citations
%\bibliographystyle{spmpsci}      % mathematics and physical sciences
%\bibliographystyle{spphys}       % APS-like style for physics
%\bibliography{}   % name your BibTeX data base

% Non-BibTeX users please use

\end{document}